\renewcommand{\phi}{\varphi}
\renewcommand{\epsilon}{\varepsilon}
\tikzset{
    itria/.style={
    draw,dashed,shape border uses incircle,
    isosceles triangle,shape border rotate=90,yshift=-1.25cm},
    rtria/.style={
    draw,dashed,shape border uses incircle,
    isosceles triangle,isosceles triangle apex angle=80,
    shape border rotate=-40,yshift=0.1cm,xshift=0.35cm},
    ltria/.style={
    draw,dashed,shape border uses incircle,
    isosceles triangle,isosceles triangle apex angle=80,
    shape border rotate=-140,yshift=0.1cm,xshift=-0.35cm},
    ritria/.style={
    draw,dashed,shape border uses incircle,
    isosceles triangle,isosceles triangle apex angle=110,
    shape border rotate=-55,yshift=0.1cm},
    letria/.style={
    draw,dashed,shape border uses incircle,
    isosceles triangle,isosceles triangle apex angle=110,
    shape border rotate=235,yshift=0.1cm}
}
\definecolor{rwth-blue}{cmyk}{1,.5,0,0}\colorlet{rwth-lblue}{rwth-blue!50}\colorlet{rwth-llblue}{rwth-blue!25}
\definecolor{rwth-violet}{cmyk}{.6,.6,0,0}\colorlet{rwth-lviolet}{rwth-violet!50}\colorlet{rwth-llviolet}{rwth-violet!25}
\definecolor{rwth-purple}{cmyk}{.7,1,.35,.15}\colorlet{rwth-lpurple}{rwth-purple!50}\colorlet{rwth-llpurple}{rwth-purple!25}
\definecolor{rwth-carmine}{cmyk}{.25,1,.7,.2}\colorlet{rwth-lcarmine}{rwth-carmine!50}\colorlet{rwth-llcarmine}{rwth-carmine!25}
\definecolor{rwth-red}{cmyk}{.15,1,1,0}\colorlet{rwth-lred}{rwth-red!50}\colorlet{rwth-llred}{rwth-red!25}
\definecolor{rwth-magenta}{cmyk}{0,1,.25,0}\colorlet{rwth-lmagenta}{rwth-magenta!50}\colorlet{rwth-llmagenta}{rwth-magenta!25}
\definecolor{rwth-orange}{cmyk}{0,.4,1,0}\colorlet{rwth-lorange}{rwth-orange!50}\colorlet{rwth-llorange}{rwth-orange!25}
\definecolor{rwth-yellow}{cmyk}{0,0,1,0}\colorlet{rwth-lyellow}{rwth-yellow!50}\colorlet{rwth-llyellow}{rwth-yellow!25}
\definecolor{rwth-grass}{cmyk}{.35,0,1,0}\colorlet{rwth-lgrass}{rwth-grass!50}\colorlet{rwth-llgrass}{rwth-grass!25}
\definecolor{rwth-green}{cmyk}{.7,0,1,0}\colorlet{rwth-lgreen}{rwth-green!50}\colorlet{rwth-llgreen}{rwth-green!25}
\definecolor{rwth-cyan}{cmyk}{1,0,.4,0}\colorlet{rwth-lcyan}{rwth-cyan!50}\colorlet{rwth-llcyan}{rwth-cyan!25}
\definecolor{rwth-teal}{cmyk}{1,.3,.5,.3}\colorlet{rwth-lteal}{rwth-teal!50}\colorlet{rwth-llteal}{rwth-teal!25}
\definecolor{rwth-gold}{cmyk}{.35,.46,.7,.35}
\definecolor{rwth-silver}{cmyk}{.39,.31,.32,.14}\colorlet{rwth-lsilver}{rwth-silver!50}
\begin{document}
\renewcommand{\P}[1]{\ensuremath{\mathcal{P}(#1)}}
\newcommand{\B}[1]{\ensuremath{\mathcal{B}(#1)}}
\newcommand{\A}{\ensuremath{A}}
\renewcommand{\L}{\ensuremath{\mathcal{L}}}
\renewcommand{\S}{\ensuremath{\mathcal{S}}}
\renewcommand{\SS}{\ensuremath{\mathrm{S}}}
\newcommand{\U}{\ensuremath{\mathcal{U}}}
\newcommand{\dcup}{\ensuremath{\overset{.}{\cup}}}
\renewcommand{\bot}{\scriptsize\ensuremath{\StopWatchEnd}}
\newcommand{\sat}{\ensuremath{\mathrm{sat}}}

\newcommand{\tree}[1]{\ensuremath{#1}}
\newcommand{\labeled}[1]{\ensuremath{\ell(#1)}}
\newcommand{\universe}[1][]{{\ensuremath{\mathcal{A}_{#1}}}}
\newcommand{\nodes}[1]{{\ensuremath{\mathcal{S}_{#1}}}}
\newcommand{\cost}{\ensuremath{\bar{\kappa}}}
\newcommand{\taus}{{\ensuremath{\mathrm{taus}}}}
\newcommand{\cshortest}[1]{\ensuremath{\kappa(#1)}}
\newcommand{\expl}[1]{\ensuremath{\mathrm{explain}(#1)}}
\newcommand{\lca}[2]{\ensuremath{\mathrm{lca}(#1,#2)}}
\newcommand{\leafs}{\ensuremath{\mathrm{leafs}}}
\newcommand{\base}{\ensuremath{S}}
\newcommand{\C}{\ensuremath{\mathcal{C}}}
\newcommand{\N}{\ensuremath{\Sagn}}
\newcommand{\Tex}{\ensuremath{\tree{T}^{ex}}}
\newcommand{\suc}{\ensuremath{\mathrm{succ}}}
\newcommand{\all}{\ensuremath{\mathrm{all}}}
\newcommand{\blocks}{\ensuremath{\mathrm{blocks}}}
\newcommand{\trans}{\ensuremath{\mathrm{trans}}}
\newcommand{\ltrans}{\ensuremath{\mathrm{ltrans}}}
\newcommand{\strans}{\ensuremath{\mathrm{strans}}}
\newcommand{\execs}{\ensuremath{\mathrm{execs}}}
\newcommand{\highi}{\ensuremath{\mathrm{high_i}}}
\newcommand{\highb}{\ensuremath{\mathrm{high_b}}}
\newcommand{\agn}{\ensuremath{\gamma}}
\newcommand{\sagn}{\ensuremath{\delta}}
\newcommand{\Sagn}{\ensuremath{\Delta}}
\newcommand{\alg}{\texttt{PTSA}}
\newcommand{\pna}{\texttt{PNA}}
\newcommand{\pn}{\texttt{PN}}

\newcommand{\review}[2]{\replaced{#2}{#1}}

\let\oldupharpoonright\upharpoonright
\renewcommand{\upharpoonright}{\ensuremath{\downharpoonright}}

\newcolumntype{P}[1]{>{\centering\arraybackslash}p{#1}}

\renewcommand{\question}[1]{{\textcolor{red}{\textbf{#1}}}}

\newenvironment{proofs}{%
  \renewcommand{\proofname}{Proof Sketch}\proof}{\endproof}

\newcommand{\any}{\tikz[baseline=-0.5ex]{\draw[] (0,0) -- (0.4,0);}}
\newcommand{\sdots}{\tikz[baseline=-0.5ex]{\draw[decorate,decoration={zigzag,segment length=0.97mm,amplitude=0.5mm}] (0,0) -- (0.4,0);}}
\newcommand{\anyy}{\tikz[baseline=-0.5ex]{\draw[] (0,0) -- (0.3,0);}}
\newcommand{\bars}{\tikz[baseline=-0.5ex]{\draw[draw opacity=0] (0,0) -- (0.,0);}}
\newcommand{\lldots}{\tikz[baseline=0.3ex]{\node[] at (0,0) {$\ldots$};}}

\newcommand{\changea}[3]{\makecell{\tikz[baseline=-0.5ex,text width=0.4cm, anchor=west,align=right]{\node[draw=none] (p1) at (0,0) {\ensuremath{#1}}; \node[draw=none] (p2) at (0,-0.3) {\ensuremath{#2}}; \draw[draw opacity=0] ($(p2.east)+(-0.2,0)$) edge[<-,bend right=30] node[midway,right,text width=0.4cm]{\tiny\ensuremath{#3}} ($(p1.east)+(-0.2,0)$);}}}
\newcommand{\changeb}[3]{\makecell{\tikz[baseline=-0.5ex,text width=0.8cm, anchor=west,align=right]{\node[draw=none] (p1) at (0,0) {\ensuremath{#1}}; \node[draw=none] (p2) at (0,-0.3) {\ensuremath{#2}}; \draw[draw opacity=0] ($(p2.east)+(-0.2,0)$) edge[<-,bend right=30] node[midway,right,text width=0.4cm,align=left]{\tiny\ensuremath{#3}} ($(p1.east)+(-0.2,0)$);}}}
\newcommand{\changec}[3]{\makecell{\tikz[baseline=-0.5ex,text width=1.1cm, anchor=west,align=right]{\node[draw=none] (p1) at (0,0) {\ensuremath{#1}}; \node[draw=none] (p2) at (0,-0.3) {\ensuremath{#2}}; \draw[draw opacity=0] ($(p2.east)+(-0.2,0)$) edge[<-,bend right=30] node[midway,right,text width=0.4cm,align=left]{\tiny\ensuremath{#3}} ($(p1.east)+(-0.2,0)$);}}}

\newcommand{\lccref}[1]{\lcnamecref{#1}~\labelcref{#1}}

\renewcommand{\vec}[2]{\ensuremath{\big(\begin{smallmatrix}
  #1 \\
  #2
\end{smallmatrix}\big)}}

\newcommand{\svec}[2]{\ensuremath{(\begin{smallmatrix}
  #1 \\
  #2
\end{smallmatrix})}}

\Crefname{lemma}{Lemma\strut}{Lemma\strut}

\newcommand{\ExternalLink}{%
    \tikz[x=1.2ex, y=1.2ex, baseline=-0.05ex]{%
        \begin{scope}[x=1ex, y=1ex]
            \clip (-0.1,-0.1) 
                --++ (-0, 1.2) 
                --++ (0.6, 0) 
                --++ (0, -0.6) 
                --++ (0.6, 0) 
                --++ (0, -1);
            \path[draw, 
                line width = 0.5, 
                rounded corners=0.5] 
                (0,0) rectangle (1,1);
        \end{scope}
        \path[draw, line width = 0.5] (0.5, 0.5) 
            -- (1, 1);
        \path[draw, line width = 0.5] (0.6, 1) 
            -- (1, 1) -- (1, 0.6);
        }
    }
\title{Technical Report with Proofs for\\A Full Picture in Conformance Checking: Efficiently Summarizing All Optimal Alignments}
\titlerunning{Proofs: Efficiently Summarizing All Optimal Alignments}
%
\author{Philipp Bär\inst{1}\orcidID{0009-0001-5277-0766} \and
Moe T. Wynn\inst{2}\orcidID{0000-0001-7743-5772}\and
Sander J. J. Leemans\inst{1,3}\orcidID{0000-0002-5201-7125}} 
%
\authorrunning{P. Bär et al.}
%
\institute{RWTH Aachen University, Aachen, Germany\\
\email{philipp.baer@rwth-aachen.de, s.leemans@bpm.rwth-aachen.de} \and
Queensland University of Technology, Brisbane, Australia\\
\email{m.wynn@qut.edu.au} \and
Fraunhofer FIT, Sankt Augustin, Germany}
\maketitle              
%
%
\setcounter{section}{3}
\section{Summarizing Alignments with Skip Alignments}
\setcounter{subsection}{0}
\subsection{Skip Alignments}
\setcounter{lemma}{6}
\begin{lemma}[Finite]\label{lem:sfinite}
\!\!A~trace~and~an~SB\!WF\!-net~have~finitely~many~skip~\mbox{alignments}.
\end{lemma}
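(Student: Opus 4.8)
\noindent\emph{Proof plan.} The plan is to prove finiteness by the standard two–step counting argument: bound the set of moves that may occur in a skip alignment, and bound the length of a skip alignment; finiteness of the whole set is then immediate. The real content is the length bound. For ordinary alignments it fails --- a net with a cycle admits infinitely many alignments, and only the optimal ones are finite in number --- so the argument has to exploit what the skip mechanism adds rather than appeal to cost‑optimality.

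First I would read the two finiteness ingredients off the hypotheses. Boundedness of the SBWF-net yields a finite set $\mathcal{M}$ of reachable markings, and the net has only finitely many transitions; finiteness of the trace $\sigma$ leaves it only $|\sigma|+1$ prefixes. From this it follows that only finitely many moves can occur in a skip alignment: a synchronous or model move is determined by a transition, a log move by a position in $\sigma$, and a skip move by the finite data it records (essentially a source and target marking, together with a trace position), each ranging over a finite set. Soundness enters only to make skip moves well defined --- from every reachable marking the final marking is still reachable --- and is not otherwise needed for this lemma.

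Second, and this is the heart of the matter, I would bound the length of a skip alignment. The moves that consume a trace event (synchronous and log moves) number at most $|\sigma|$. The remaining, model‑side moves are the dangerous ones, since in an ordinary alignment an arbitrarily long run of the net may sit between two consecutive trace‑consuming moves; the point of a skip move is precisely that such a detour is recorded once, by its endpoints, instead of being spelled out. I would therefore invoke the defining conditions of a skip alignment to show that between two consecutive trace‑consuming moves --- and before the first and after the last --- only boundedly many moves occur: concretely, that the sequence of markings traversed in such a stretch cannot repeat (a repetition would be an explicit cycle, which is exactly what a skip move replaces), so that stretch has length at most $|\mathcal{M}|$. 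This caps the total length of any skip alignment at roughly $|\sigma|\cdot(|\mathcal{M}|+1)$, and combined with the finite move alphabet it gives only finitely many skip alignments. The step I expect to need the most care is the non‑repetition claim: it must be derived from the definition of skip alignment (the condition ruling out an explicitly represented cycle, or the canonicity/maximality built into skip moves), not from optimality, and pinning that implication down precisely is where the proof really lies.
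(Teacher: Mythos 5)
Your plan matches the paper's proof: both reduce finiteness to (i) a finite move alphabet (at most $|\sigma|$ synchronous and log moves, and only finitely many distinct skip moves since $N$ is finite) and (ii) a bound on the length of any stretch of consecutive skip moves, obtained from the fact that the skip-alignment definition itself forbids redundant cycles rather than from cost-optimality. The only difference is cosmetic: you ground the no-cycle step in non-repetition of reachable markings, whereas the paper argues structurally that an unboundedly long skip alignment would have to execute some loop block by consecutively skipping over both of its children, contradicting the definition of the skip language $\S(N')$ --- and for a block-structured net these two formulations coincide once a stretch contains only skip moves.
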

\begin{proof}
    Assume a trace $\sigma$ has infinitely many skip alignments on an SBWF-net $N$. 
    The number of synchronous and log moves is limited by $|\sigma|$. Since $N$ is finite, only finitely many different skip moves exist. Hence, for every $n > 0$ there must exist a skip alignment $\sagn$ such that $|\sagn| > n$. Only loop blocks in $N$ can contribute to an unbounded length of $\sagn$. Therefore, from some $m > 0$ on, all skip alignments $\sagn$ with $|\sagn| > n$ and $n \geq m$ execute at least one loop block $N' \in N$ by consecutively skipping over both its children, contradicting the definition of $\S(N')$.
\end{proof}

\subsection{Transforming Alignments to Skip Alignments}
\setcounter{lemma}{8}
\begin{lemma}[Transformation to Skip Alignments]\label{lem:transform}
    Let $\sigma$ be a trace and $N$ an SBWF-net. Let $\agn$ be an alignment for $\sigma$ on $N$ and let $\agn'$ be the mix alignment obtained by replacing all model moves in $\agn$ with skip moves. Then, the exhaustive application of transformation rules to $\agn'$ results in a skip alignment~$\sagn$~for~$\sigma$~on~$N$.
\end{lemma}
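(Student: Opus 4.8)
\begin{proofs}
The plan is to decompose the claim into three parts: (a)~every transformation rule maps a mix alignment for $\sigma$ on $N$ to another mix alignment for $\sigma$ on $N$; (b)~no infinite chain of rule applications exists, so ``exhaustive application'' is well defined; and (c)~a mix alignment to which no rule is applicable is a skip alignment. Since $\agn'$ is a mix alignment for $\sigma$ on $N$ already by construction---replacing a model move by a skip move changes neither the log projection (both carry $\gg$ on the log side) nor which block is executed or skipped---parts (a)--(c) together give that any normal form $\sagn$ reached from $\agn'$ is a skip alignment for $\sigma$ on $N$.

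For (a) I would carry out a case analysis over the transformation rules. Each rule rewrites only a bounded window of consecutive moves and leaves every synchronous and log move in place, so the projection onto the log is unchanged and remains $\sigma$; it then suffices to check, rule by rule, that the rewritten window induces the same change of markings on $N$ as the original window---equivalently, that a valid (skip-)execution of the block affected by the rule is replaced by another valid (skip-)execution of the same block. This is where the block structure of $N$ enters, and I expect it to be mechanical once each rule is unfolded against the block types (sequence, choice, parallel, and loop blocks).

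For (c) I would argue by contraposition: if a mix alignment for $\sigma$ on $N$ is not a skip alignment, it must violate one of the defining conditions of $\S(N)$---in particular the condition, already used in the proof of \Cref{lem:sfinite}, that no loop block is executed by consecutively skipping both of its children, together with the requirement that a skipped block be skipped by a single maximal skip move---and for each such violation the left-hand side of some rule matches, so the alignment is not a normal form. This step hinges on the rule set being ``complete'' with respect to the defining conditions, which has to be checked condition by condition.

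The step I expect to be the main obstacle is (b). A crude measure such as the alignment length or the number of skip moves need not strictly decrease under every rule---a rule that merely relocates a skip move past an adjacent move, so that sibling skips can later be merged, keeps the skip count fixed---so I would use a lexicographically ordered measure with the number of skip moves as the dominant component and a positional component (for instance, the number of pairs of moves that are out of the canonical order targeted by the rules) as the subordinate one: every merging rule strictly decreases the dominant component, and every relocating rule strictly decreases the subordinate component while fixing the dominant one. Pinning down the right subordinate component and verifying that it is bounded below is the part that needs care.
\end{proofs}
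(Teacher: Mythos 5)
Your decomposition coincides with the paper's proof on its two substantive parts, and goes beyond it on the third. For (a), the paper argues exactly as you plan: the log projection is unchanged because only skip moves are rewritten, and the model projection remains an execution of $N$ --- trivially for the cycling rules, and for the lifting rules because they replace moves only at the positions where the children were executed. For (c), the paper argues, as you do by contraposition, that once no rule applies every skip move is maximally general (otherwise one of (L1--L4) would fire) and no cycle remains (otherwise (C1--C2) would fire), which is precisely the membership condition for $\S(N)$. Where you differ is (b): the paper does not prove termination of the transformation rules at all; the termination argument it does give (\Cref{lem:termination}) concerns the later \emph{reduction} rules (R1--R3), so exhaustiveness of the transformation phase is simply taken for granted there. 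Your worry about purely relocating rules is really imported from that reduction system --- it is (R2) and (R3) that merely move skip moves past other moves --- whereas the transformation rules either merge consecutive child skips into a single parent skip or delete a redundant loop iteration, so a lexicographic measure such as (number of moves, total depth of the skipped blocks) already decreases strictly under every lifting and cycling step. In short, your plan is sound and, on the termination point, more careful than the published proof.
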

\begin{proof}
    Every move in $\sagn$ is a log, synchronous, or skip move. It holds that $\pi_1(\sagn)_{\upharpoonright_{\neq \gg}} = \pi_1(\agn)_{\upharpoonright_{\neq \gg}}=\sigma$ since only skip moves are modified in the transformation. Furthermore, because $\agn$ is an alignment on $N$, $\pi_2(\agn)_{\upharpoonright_{\neq \gg}}$ and $\pi_2(\agn')_{\upharpoonright_{\neq \gg}}$ are executions of $N$. Cycling rules trivially maintain that property. Lifting rules replace moves in $\agn'$ only at the positions their children where executed at, i.e., the application of any rule ensures that $\pi_2(\sagn)_{\upharpoonright_{\neq \gg}}$ is an execution of $N$. In $\sagn$, every skip move is maximally general, otherwise one of (L1-L4) could be applied and $\sagn$ contains no cycles, otherwise (C1-C2) could be applied. Hence, $\pi_2(\sagn)_{\upharpoonright_{\neq \gg}} \in \S(N)$ and $\sagn$ is a skip alignment for $\sigma$ on $N$.
\end{proof}

\setcounter{subsection}{3}
\subsection{A Canonical Normal Form for Skip Alignments}
\setcounter{proposition}{0}
We show that reduction rules may be applied in any order resulting in the same unique skip alignment (\emph{canonicity}). To that end, we first prove \emph{termination} and then \emph{local confluency}, i.e., when applying two rules yields two different skip alignments, then there are subsequent reduction rules to make them equal again.
\begin{proposition}[Termination of Reduction Rules]\label{lem:termination}
    Let $\sagn$ be a skip alignment. Repeated application of the reduction rules to $\sagn$ is terminating.
\end{proposition}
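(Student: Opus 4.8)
\begin{proofs}
The plan is to equip the set of skip alignments with a \emph{ranking function} $\mu$ into a well-founded order such that every reduction step strictly decreases $\mu$; termination is then immediate. (By \Cref{lem:sfinite} it would in fact be enough to exclude reduction \emph{cycles}, but a strictly decreasing measure is cleaner and does not presuppose that every intermediate object is again a skip alignment.) The real work lies in choosing $\mu$: the reduction rules come in two flavours --- \emph{structural} rules that act on the block $B$ carried by a skip move, and \emph{transposition} rules that exchange two adjacent ``independent'' moves towards their prescribed canonical order --- and no single naive quantity works for both, so $\mu$ must be a lexicographically layered pair. In particular $|\sagn|$ is useless as the leading component, since a structural rule that replaces a skip of $B$ by skips of all children of $B$ may well \emph{increase} the number of moves.

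The fix is to weight moves by block size. Give each move a weight $w(m)\in\mathbb{N}$: a skip move over a block $B$ receives the number of nodes in the subtree rooted at $B$ in the block tree of $N$, while every log or synchronous move receives a fixed constant; put $\mu_1(\sagn)=\sum_{m}w(m)$, the sum over all moves of $\sagn$. A structural rule re-expands a skip over $B$ into skip moves over \emph{proper} sub-blocks of $B$ (the pattern depending on the operator of $B$ --- all children of a sequence or parallel block, the selected branch of a choice, and so on), and the subtree sizes of those sub-blocks sum to strictly less than that of $B$, because the subtree of $B$ also contains $B$ itself; hence $\mu_1$ strictly decreases, as it does for any rule that merely deletes moves. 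The transposition rules leave the multiset of moves, hence $\mu_1$, untouched; for them I would take $\mu_2(\sagn)$ to be the number of pairs of moves of $\sagn$ that are out of order with respect to the fixed canonical ordering, each transposition being oriented precisely so as to remove at least one such inversion. Then $\mu=(\mu_1,\mu_2)$ under the lexicographic order on $\mathbb{N}\times\mathbb{N}$ is well-founded and strictly decreases under every reduction rule, which gives termination.

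The main obstacle is the exhaustive case analysis over the concrete reduction rules, and within it the two monotonicity claims used above: that each structural rule replaces a block only by strictly smaller sub-blocks of it --- never a larger or an incomparable block, so that $\mu_1$ cannot increase --- and that the transposition rules neither create, delete, merge, nor re-block any move, so that $\mu_1$ is genuinely invariant under them and the whole decrease is localised in $\mu_2$. Once the precise effect of each individual rule on the move sequence and on the block tree has been pinned down, well-foundedness of the lexicographic order on $\mathbb{N}^2$ concludes the argument.
\end{proofs}
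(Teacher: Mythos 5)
Your overall strategy---a measure into a well-founded (lexicographically layered) order that strictly decreases under every rule---is exactly the paper's, but the concrete measure you build does not survive contact with the actual reduction rules, because you have guessed the rule set wrong. The rules (R1)--(R3) that define the normal form are all \emph{reordering} rules: (R1) and (R2) push a skip move rightwards past a log or synchronous move from which it is independent, and (R3) sorts out-of-order skip moves of parallel branches according to a fixed total order on blocks. There is no ``structural'' rule that re-expands a skip over a block $B$ into skips over its sub-blocks (those are the lifting rules (L1)--(L4) of the \emph{transformation} to skip alignments, a different rewrite system). Consequently your leading component $\mu_1$, the block-weighted move count, is invariant under every reduction rule, and the entire burden falls on $\mu_2$.

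That is where the gap is. Your $\mu_2$ is a single flat count of all pairs that are ``out of order with respect to the fixed canonical ordering,'' and you justify its decrease by saying each transposition removes at least one inversion. But (R2) is not an adjacent transposition: it moves a skip move $\vec{\gg}{s(N)}$ across several intervening positions to the far side of a synchronous move, and in doing so it can \emph{create} up to $|\sagn|$ new out-of-order skip--skip pairs while removing only one skip-before-synchronous pair---so the flat count can strictly increase. The paper's proof avoids this by splitting the inversions into two layers and ordering them lexicographically: $t_1$ counts pairs where a skip move precedes a log or synchronous move (strictly decreased by (R1) and (R2), never increased by any rule), and $t_2$ counts out-of-order skip--skip pairs (strictly decreased by (R3), possibly increased by (R2), but only by at most $|\sagn|$). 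These are combined into the single decreasing quantity $(|\sagn|+1)\cdot|t_1|+|t_2|$. So the repair is not to reweight moves by block size but to re-layer your lexicographic pair as $(|t_1|,|t_2|)$; as written, your measure is constant in its first component and non-monotone in its second.
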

\begin{proof}
    We give a weighting function that monotonically decreases with every application of a rule.
    With every application of (R1) or (R2), the size of the set of pairs $t_1 = \{ (i,j) \mid 1 \leq i < j \leq |\sagn| \land \sagn_i = \vec{\gg}{a} \land \sagn_j \in \{ \vec{e}{\gg}, \vec{e}{a} \} \}$ decreases. None of (R1-R3) increases the size of this set again.
    With every application of (R3), the size of the set of pairs $t_2 = \{ (i,j) \mid 1 \leq i < j \leq |\sagn| \land \sagn_i = \vec{\gg}{s(N_1)} \land \sagn_j = \vec{\gg}{s(N_2)} \land N_2 < N_1 \}$ decreases. Only the application of (R2) may increase $t_2$ by at most $|\sagn|$ elements.
    Hence, $(|\sagn|\!+\!1)\cdot |t_1| + |t_2|$ is monotonically decreasing with every rule application.
\end{proof}
We prove local confluency for every pair of rules where the left sides overlap. We show that after applying one of the rules the resulting skip alignment can be made equal to the one after applying the other rule.
\begin{proposition}[Local Confluency of Reduction Rules]\label{lem:confluency}
    The reduction rules are locally confluent.
\end{proposition}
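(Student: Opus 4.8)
The plan is to apply Newman's lemma: the reduction system is terminating by \Cref{lem:termination}, so local confluency already yields confluency and hence the unique normal form that makes the construction canonical. To prove local confluency I would run the standard critical-pair analysis for rewriting on finite sequences. If two rule instances act on position-disjoint infixes of $\sagn$, they clearly commute and the two one-step reducts reconverge after one further step each, so the only source of a genuine divergence is a pair of instances whose matched positions overlap. There are only finitely many such overlap patterns --- one (or a few) for each unordered pair drawn from $\{(\mathrm{R1}),(\mathrm{R2}),(\mathrm{R3})\}$, including the self-overlaps --- and for each I would exhibit a common reduct using only further reduction steps.

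For the ``homogeneous'' interactions I expect short arguments. An instance of (R1) and an instance of (R2) are, by the weighting of \Cref{lem:termination}, each pinned down by the witness pair $(i,j)\in t_1$ they remove; when two such instances overlap, removing one witness either leaves the other a valid redex --- and then the order is irrelevant --- or turns it vacuous, in which case the two results already coincide. Two overlapping instances of (R3) amount to two adjacent out-of-order transpositions of skip moves $\vec{\gg}{s(N_1)},\vec{\gg}{s(N_2)}$ sharing an endpoint; these join by the usual argument for reordering towards the order induced by the block tree, since the target order depends only on $<$ and not on the route of swaps. The mixed pair (R1)/(R3) I expect to be of the same easy flavour, since (R1) alone never creates a new $t_2$-witness.

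The hard part will be the interaction of (R2) with (R3). As the $+\,|\sagn|$ slack in the weighting of \Cref{lem:termination} already signals, (R2) is the unique rule that can manufacture fresh (R3)-redexes, so when an (R2)-redex overlaps an (R3)-redex, firing (R2) first may simultaneously destroy the original out-of-order pair and introduce several new ones. I would have to track exactly which skip moves (R2) reshuffles and show that a bounded number of (R3) steps on the (R2)-branch and, symmetrically, one (R2) step possibly followed by (R3) steps on the (R3)-branch reach the same sequence. The leverage I plan to use is that (R2) only rewrites moves confined to a single block $N'\in N$, so its effect on the positions of the skip moves is explicit and local, while (R3) only ever sorts those moves into the fixed tree order; turning this per-block bookkeeping into a single joining diagram is the technical crux.

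Once every critical pair has been joined, the reduction rules are locally confluent; together with \Cref{lem:termination} and Newman's lemma this gives confluency, so repeated application of the rules in any order yields the same normal form, which is the canonicity statement used in the remainder of the section.
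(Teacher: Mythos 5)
There is a genuine gap: your proposal lays out the standard critical-pair strategy (and correctly invokes Newman's lemma for canonicity, which is how the paper proceeds as well), but it stops exactly where the proof has to start. The entire content of the paper's argument is the explicit joining of the two overlap patterns that actually occur, namely (R2,R3) and (R2,R2): for (R2,R3), first pushing $\vec{\gg}{s(N_3)}$ over the synchronous move $\vec{e}{a}$ by (R2) is matched on the other branch by the (R3)-swap of $\vec{\gg}{s(N_2)}$ and $\vec{\gg}{s(N_3)}$ followed by one further (R2) step; for (R2,R2), when two skip moves are pushed over the same synchronous move in either order, the two results differ only in the relative order of $\vec{\gg}{s(N_1)}$ and $\vec{\gg}{s(N_2)}$ behind $\vec{e}{a}$ and are rejoined by a single application of (R3), applied to whichever branch violates the constraint $[N_2<N_1]$. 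You explicitly defer the (R2)/(R3) interaction as ``the technical crux,'' so the one case you yourself identify as hard is left unproved, and you never treat the (R2,R2) self-overlap at all, even though it is the other genuine critical pair. Without these joining diagrams the statement is not established.

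Two further points weaken the parts you do argue. First, the ``homogeneous'' cases are justified by appeal to the weighting of \Cref{lem:termination}; a termination measure only shows that both branches terminate, it says nothing about the two one-step reducts being joinable, so ``the order is irrelevant'' and ``the results already coincide'' would have to be verified against the rules themselves (the paper avoids this by observing that those left-hand sides simply cannot overlap). Second, the leverage you plan to use for the hard case --- that (R2) ``only rewrites moves confined to a single block'' --- is an assumption about the shape of (R2) that is not justified; what the proof actually needs, and what the paper uses, is only that an (R2) step commutes with an adjacent (R3) swap up to one additional rule application. As it stands, the proposal is a proof plan whose decisive verifications are missing.
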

\begin{proof}
    We only inspect moves where there can be overlapping rules, i.e., (R2,R3) and (R2,R2).
    Let $\sagn = \langle \any{} \vec{\gg}{s(N_1)} \sdots\allowbreak \vec{\gg}{s(N_2)}, \vec{\gg}{s(N_3)} \sdots \vec{e}{a} \any{} \rangle$. We show that independent of the order in which we apply (R2,R3) to $\langle \vec{\gg}{s(N_2)},\vec{\gg}{s(N_3)} \rangle$~or~(R2,R2) to $\langle \vec{\gg}{s(N_1)} \sdots \vec{\gg}{s(N_2)} \rangle$, the resulting skip alignments can be made equal again.
    \begin{itemize}[nosep]
        \item First applying (R2) pushing $\vec{\gg}{s(N_3)}$ over $\vec{e}{a}$ is equal to swapping $\vec{\gg}{s(N_2)}$ with $\vec{\gg}{s(N_3)}$ by (R3) and shifting back $\vec{\gg}{s(N_3)}$ afterwards with (R2).
        \item First applying (R2) to $\vec{\gg}{s(N_2)}$ pushing the move after $\vec{e}{a}$ and then applying (R2) to $\vec{\gg}{s(N_1)}$ is equal to applying (R2) first to $\vec{\gg}{s(N_1)}$ and then to $\vec{\gg}{s(N_2)}$, followed by an additional application of (R3) to the order $\langle \vec{e}{a},\vec{\gg}{s(N_1)},\allowbreak\vec{\gg}{s(N_2)} \rangle$ of the first (R2,R2) if the constraint $[N_2 < N_1]$ holds,~otherwise we apply (R3) to the result $\langle \vec{e}{a},\vec{\gg}{s(N_2)},\vec{\gg}{s(N_1)} \rangle$ of the second (R2,R2).
    \end{itemize}
\end{proof}
Canonicity follows from both propositions together with Newman's Lemma \cite{newman}.

\setcounter{lemma}{12}
\begin{lemma}[Unique Coinciding Normal Form]\label{lem:unique}
    Every alignment coincides with a single skip alignment in normal form.
\end{lemma}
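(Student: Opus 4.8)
\begin{proofs}
The plan is to split the statement into \emph{existence}---every alignment coincides with some skip alignment in normal form---and \emph{uniqueness}---with at most one such---and to prove each in turn, reusing \lccref{lem:transform} together with the termination and confluency propositions.

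For existence, fix an alignment $\agn$ for a trace $\sigma$ on an SBWF-net $N$ and let $\agn'$ be the mix alignment obtained by replacing every model move with a skip move. By \lccref{lem:transform}, exhaustively applying the transformation rules to $\agn'$ yields a skip alignment $\sagn$ for $\sigma$ on $N$, and, as its proof shows, this $\sagn$ still coincides with $\agn$: the transformation only rewrites skip content, so $\pi_1(\sagn)_{\upharpoonright_{\neq \gg}} = \sigma$ is preserved and $\pi_2(\sagn)_{\upharpoonright_{\neq \gg}} \in \S(N)$ is the maximally general, cycle-free reading of $\pi_2(\agn)_{\upharpoonright_{\neq \gg}}$. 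Next apply the reduction rules to $\sagn$; by \lccref{lem:termination} this halts at some $\sagn^\star$ irreducible with respect to (R1--R3), i.e.\ in normal form. It then remains to check that the reduction rules preserve ``coincides with $\agn$'': it is straightforward to verify for each of (R1), (R2), (R3) that $\pi_1$ stays equal to $\sigma$, that $\pi_2$ stays in $\S(N)$ by the same argument as in \lccref{lem:transform}, and that the underlying alignment is unchanged, since the rules only locally reorder or simplify skip moves. Hence $\sagn^\star$ is a normal-form skip alignment coinciding with $\agn$.

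For uniqueness, let $\sagn_1$ and $\sagn_2$ both be in normal form and both coincide with $\agn$. The crucial step is to show that any two skip alignments coinciding with the same alignment are \emph{joinable} under the reduction relation. Granting this, $\sagn_1$ and $\sagn_2$ have a common reduct; since reduction terminates (\lccref{lem:termination}) and is locally confluent (\lccref{lem:confluency}), Newman's Lemma \cite{newman} yields confluence, so $\sagn_1$ and $\sagn_2$ share a unique normal form, and being already in normal form they are equal.

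The heart of the argument is therefore the joinability claim, which I would reduce to confluence of a single combined rewrite system. Two skip alignments coinciding with $\agn$ are both reachable from $\agn'$ by transformation rules, so they can differ only in the order in which skip moves were pushed past log and synchronous moves, the order in which lifting merged nested blocks, and the relative order of sibling skip moves---exactly the redundancies that (R1--R3) eliminate. I would (i) orient the transformation rules and show that their union with (R1--R3) terminates, by extending the weighting of \lccref{lem:termination} lexicographically with a leading term counting cycles and non-maximal skip moves (which the transformation rules decrease and the reduction rules do not increase); (ii) establish local confluency of the combined system by checking the finitely many additional critical pairs between a transformation rule and a reduction rule, reusing \lccref{lem:confluency} for the (R2,R2) and (R2,R3) overlaps; and (iii) invoke Newman's Lemma to conclude that $\agn'$ has a unique normal form---necessarily the $\sagn^\star$ of the existence part---so every skip alignment coinciding with $\agn$ reduces to it. The expected obstacle is step (ii): the overlaps of the lifting rules (L1--L4) with (R2) and of the cycling rules (C1--C2) with (R1) need a careful case analysis, since there block structure and move order interact.
\end{proofs}
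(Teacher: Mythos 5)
Your existence half is fine and matches what the paper implicitly relies on: the transformation lemma produces a skip alignment coinciding with $\agn$, termination of (R1--R3) turns it into a normal form, and checking that the reduction rules preserve coincidence is routine. The problem is the uniqueness half. You correctly identify that everything hinges on showing that two normal-form skip alignments coinciding with the same $\agn$ are equal, but you then reduce this to joinability of any two elements of $\C(\agn)$ under reduction, and propose to obtain that by proving termination and local confluence of the \emph{combined} system of transformation rules (L1--L4, C1--C2) and reduction rules (R1--R3). That step is not carried out: you only sketch a program (orient the transformation rules, extend the weighting lexicographically, check the new critical pairs) and you yourself flag the critical-pair analysis as the open obstacle. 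Since this is precisely where the mathematical content of the lemma lives, the proposal as it stands does not prove the statement. There is also a hidden assumption in your reduction: you need every skip alignment in $\C(\agn)$ to be reachable from the mix alignment $\agn'$ by transformation rules, which depends on how ``coincides'' is defined and is not justified.

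The paper avoids all of this with a more elementary argument. It takes the normal forms $\sagn'$ and $\hat{\sagn}'$ of two elements of $\C(\agn)$ (these exist and are well defined by canonicity, i.e., by the two propositions plus Newman's Lemma), observes that they consist of the same moves and hence, if distinct, differ only in the order of moves, and inspects the first position $i$ where they disagree; one of them must have a skip move there. A short case analysis on what the other one does at position $i$ (log move, synchronous move, or a different skip move) shows in every case that (R1), (R2), or (R3) is still applicable to one of the two, contradicting irreducibility. I would recommend replacing your combined-rewrite-system plan with this direct first-difference argument: it needs no new termination or confluence proof, only the canonicity you already have.
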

\begin{proof}
    We show that for an alignment $\agn$ the set of corresponding skip alignments $\C(\agn)$ coincides with a single normal form. Let $\sagn$ and $\hat{\sagn}$ in $\C(\agn)$ be corresponding skip alignments of $\agn$. With canonicity, they reduce to coinciding normal forms $\sagn'$ and $\hat{\sagn}'$. Assume $\sagn' \neq \hat{\sagn}'$. Both normal forms share the same moves, so their order must be different.
    At the position $i$ of their first difference, one of $\sagn'$ and $\hat{\sagn}'$ has to perform a skip move. W.l.o.g., we assume $\sagn'_i = \vec{\gg}{s(N_1)}$.
    \begin{itemize}[nosep]
        \item If $\hat{\sagn}'_i = \vec{e}{\gg}$ performs a log move, then this move appears at a position $\sagn'_j$ with $j>i$. Because no synchronous move can take place in between positions~$i$ and $j$ in $\sagn'$, (R1) can be applied contradicting the normal form.
        \item If $\hat{\sagn}'_i = \vec{e}{a}$ performs a synchronous move, then the skip move $\sagn'_i$ is not required before performing the synchronous move in $\hat{\sagn}'$, so (R2) can be applied to $\sagn'$ contradicting the normal form.
        \item If $\hat{\sagn}'_i = \vec{\gg}{s(N_2)}$ performs a skip move, then this move is performed at a position $j>i$ in $\sagn'$.
        If there is a synchronous move in $\sagn'$ in between $i$ and $j$, then $\vec{\gg}{s(N_2)}$ is independent from the synchronous move, i.e., (R2) can be applied to $\hat{\sagn}'$ contradicting the formal form.
        If there is a log move in $\sagn'$ in between $i$ and $j$, then (R1) can be applied to $\sagn'$ contradicting to the normal form.
        Otherwise, there are only skip moves in between positions $i$ and $j$. Similarly, only skip moves appear in between $\vec{\gg}{s(N_2)}$ and $\vec{\gg}{s(N_1)}$ in $\hat{\sagn}'$. As both orders are valid (both are skip alignments for the same trace), $N_1$ and $N_2$ must be descendants in different branches of a parallel block. Hence, the order of $\vec{\gg}{s(N_2)}$ and $\vec{\gg}{s(N_1)}$ depends on the total order of blocks, i.e., (R3) can be applied to one of $\sagn'$ and $\hat{\sagn}'$ contradicting the normal form.
    \end{itemize}
\end{proof}

\subsection{Computing All Optimal Skip Alignments in Normal Form}
\setcounter{lemma}{15}
\begin{lemma}[Correctness \& Completeness]
    For any trace and SBWF-net, the extended \textsc{A-star} search computes all optimal skip alignments in normal form.\label{lem:cc}%
\end{lemma}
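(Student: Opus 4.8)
\begin{proofs}
The plan is to read the extended \textsc{A-star} search as a shortest-path search on a directed, edge-weighted \emph{search graph}: its vertices are the reachable search states (a position in $\sigma$ together with a marking of $N$, augmented with the bookkeeping needed to rule out non-normal move orders), its edges are the available log, synchronous and skip moves weighted by their costs, and a path from the initial to the final state encodes the move sequence read off it. By construction that sequence always satisfies $\pi_1(\sagn)_{\upharpoonright_{\neq\gg}}=\sigma$; moreover a skip edge is offered only when the skip move is maximally general and keeps $\pi_2(\sagn)_{\upharpoonright_{\neq\gg}}$ in $\S(N)$ (so, e.g., no loop block is entered by consecutively skipping both of its children), and extra edge guards forbid precisely the local patterns on whose left side a reduction rule (R1), (R2) or (R3) can fire. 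The path cost equals the cost of the encoded skip alignment. With these conventions the statement splits into \emph{soundness} — every path reported is an optimal skip alignment in normal form — and \emph{completeness} — every optimal skip alignment in normal form is reported.

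For soundness I would argue in three steps. (i) \emph{Validity}: a complete path encodes a skip alignment, since $\pi_1$ projects to $\sigma$ and, because skip edges preserve membership in $\S(N)$ exactly as the lifting argument in the proof of \cref{lem:transform} does, $\pi_2$ projects to an element of $\S(N)$. (ii) \emph{Normal form}: each edge guard mirrors a reduction-rule pattern, so a complete path is irreducible, and by canonicity (\cref{lem:termination,lem:confluency}) irreducibility coincides with being in normal form. (iii) \emph{Optimality}: the heuristic stays admissible for the extended move set, so the standard \textsc{A-star} correctness argument shows every returned path has minimum cost among complete paths, hence encodes an optimal skip alignment.

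For completeness, let $\sagn^\ast$ be an optimal skip alignment in normal form and read it left to right as a move sequence. I claim this sequence traces a path in the search graph: every prefix reaches a reachable search state, and no edge guard is ever violated, because a violated guard would exhibit a reduction rule fireable on $\sagn^\ast$, contradicting its normality — this is exactly the case distinction already made in the proof of \cref{lem:unique} (a skip taken before a later log or synchronous move enables (R1) or (R2); a mis-ordered pair of parallel-branch skips enables (R3)). This path has the optimal cost $C^\ast$. Running the search until the priority queue holds no state with $g+h\le C^\ast$ (cost-so-far plus heuristic estimate) from which the final state is still reachable, and reopening closed states on strict $g$-improvement, forces every state on this path to be expanded, so $\sagn^\ast$ is output. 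Termination of this exhaustive variant follows from \cref{lem:sfinite}: there are only finitely many skip alignments, and each path of cost at most $C^\ast$ is finite.

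The main obstacle is establishing the equivalence of the \emph{local} edge guards with the \emph{global} normal-form property in both directions: soundness needs that blocking every rule pattern already yields irreducibility, and completeness needs that an optimal normal-form skip alignment never triggers a guard. Both reduce to the structural analysis behind \cref{lem:unique} together with local confluency, which is why that lemma is the decisive ingredient; the remaining, more routine obstacle is the standard argument that continuing \textsc{A-star} past the first solution with node reopening enumerates all minimum-cost paths and no others.
\end{proofs}
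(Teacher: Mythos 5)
Your overall decomposition mirrors the paper's: soundness split into validity of the encoded skip alignment, normal form, and optimality via \textsc{A-star}; completeness by arguing that every optimal normal-form skip alignment is realizable in the search space. The genuine gap is in how you tie the normal form to the search space. You posit that the search has ``edge guards that forbid precisely the local patterns on whose left side (R1)--(R3) can fire,'' and then you yourself flag the equivalence of these local guards with the global normal-form property as ``the main obstacle,'' deferring it to the analysis behind \cref{lem:unique}. That deferred step is exactly what a proof of this lemma must supply, and it is also not how the paper's search is built: successors are generated by Algorithm~1, which appends entire skip paths $p \in \SS(N)^*$ (optionally followed by a matching synchronous move) or a log move, and normality is enforced by a Boolean test $\mathrm{r}$ checking whether the partial skip alignment can still be \emph{extended} to a skip alignment in normal form --- a lookahead condition, not a local pattern block. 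The paper's correctness argument then closes because at a goal state the only remaining extension is $\langle\rangle$, so the constructed $\sagn$ itself is in normal form; your sketch never discharges the analogous obligations for your guards (that blocking each rule pattern as it would be completed yields irreducibility of the whole sequence, and that no prefix of a genuine normal-form alignment is ever blocked). Note also that being in normal form \emph{is} irreducibility by definition; canonicity via \cref{lem:termination,lem:confluency} is about uniqueness of the reduct and is not what bridges the two.

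On completeness, the paper argues via the successor function: if an optimal normal-form $\sagn$ were missed, take the maximal explored prefix $\sagn'$; since edge costs equal move costs and the extended \textsc{A-star} search is complete, the state $(\sagn', \sigma')$ must be missing a successor, and a case analysis on the shape of the remaining suffix (only skip moves; a log move first; skip moves followed by a synchronous move --- nothing else can occur in normal form) shows that each case is produced by Algorithm~1, a contradiction. Your version instead leans on an exhaustive \textsc{A-star} variant with node reopening and a $g+h \le C^\ast$ stopping rule; invoking such properties of the extended search is comparable to what the paper does, but without the suffix case analysis you have not shown that the move sequence of $\sagn^\ast$ actually corresponds to transitions of the search space as defined (successor steps append skip \emph{paths}, possibly fused with a synchronous move, not single skip-move edges), so the claim that $\sagn^\ast$ ``traces a path in the search graph'' remains unproven for the algorithm the lemma is about.
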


\begin{proof}
    Let $\sigma$ be a trace and $N$ an SBWF-net. We apply the extended \textsc{A-star} search to our state space starting in the initial state $(\langle \rangle, \sigma)$. First, we show correctness, then, we show completeness of the computations.

    Let $(\sagn, \langle \rangle)$ be a cost-minimally reached goal state. Then, $\sagn$ is a skip alignment in normal form for $\sigma$ on $N$:
    \begin{itemize}
        \item The trace $\sigma$ is covered by $\sagn$, i.e., $\pi_1(\sagn)_{\upharpoonright_{\neq \gg}} = \sigma$, since Algorithm 1 iteratively consumes $\sigma$ to result in $\sigma'=\langle \rangle$. Furthermore, $\pi_2(\sagn)_{\upharpoonright_{\neq \gg}} \in \S(N)$ because for every non-log move appended in the construction of $\sagn$, we check extendability of the current model execution in the skip language of $N$ (lines~2 and~6). States explored in the search consist only of skip moves (T1,T2), synchronous moves (T2), and log moves (T3), i.e., $\sagn$ is a skip alignment for $\sigma$ on $N$.

        \item It additionally is a skip alignment in normal form: (T3) cannot break the normal form, because the log move is appended to some partial skip alignment, which by construction of Algorithm 1 does not end with a skip move. (T1) and (T2) explicitly ensure by calling the Boolean test function $\mathrm{r}$, that the resulting (partial) skip alignment can be extended to a skip alignment in normal form. This also holds for $\sagn$ which can only be extended by the empty sequence $\langle \rangle$, i.e., $\sagn$ is in normal form.

        \item Optimality of $\sagn$ follows from the guarantee that \textsc{A-star} only returns optimal solutions and the fact that edge costs between nodes in the search space match skip alignment move costs.
    \end{itemize}
    Because $\sagn$ is a skip alignment, it is in normal form, and it is optimal, the extended \textsc{A-star} search on our search space is correct.

    Assume $\sagn$ is an optimal skip alignment in normal form for $\sigma$ on $N$ and $\sagn$ is not found by the extended \textsc{A-star} search on our search space. Then, there exists a largest prefix $\sagn'$ of $\sagn$ and a state $(\sagn', \sigma')$ explored in the search, such that there is no cost-minimal path from $(\sagn', \sigma')$ to the goal state $(\sagn, \langle \rangle)$. Because edge costs are move costs and because the extended \textsc{A-star} search algorithm is complete, there exists \emph{no path at all} from $(\sagn', \sigma')$ to $(\sagn, \langle \rangle)$. Since $\sagn'$ is the maximal prefix of $\sagn$ with that property, $(\sagn', \sigma')$ is \emph{missing a successor}, i.e., Algorithm 1 is incomplete. We inspect the missing suffix $\sagn''$ of $\sagn$, i.e., $\sagn = \sagn' \cdot \sagn''$, and distinguish three cases:
    \begin{itemize}
        \item If $\sagn''$ consists only of skip moves, then this $\sagn''$ is discovered in Algorithm 1 as all skip paths $p \in \SS(N)^*$ are considered (line~2), and a suitable successor is computed by Algorithm 1.
        \item If $\sagn''$ starts with a log move, then this move is covered (line~9) and a suitable successor is computed by Algorithm 1.
        \item Otherwise, $\sagn''$ starts with an unbounded number of skip moves followed by a synchronous move. Because in Algorithm 1, all matching transitions $t \in T$ (line~5) and all skip paths $p \in \SS(N)^*$ (line~6) are inspected, this move is covered and a suitable successor is computed by Algorithm 1.
    \end{itemize}
    Note that $\sagn''$ cannot start with another sequence of moves, because $\sagn$ is in normal form. This contradicts the assumption that the set of successors of $(\sagn', \sigma')$ computed by Algorithm 1 is incomplete.

    Hence, no such undiscovered $\sagn$ exists and the extended \textsc{A-star} search on our search space is complete.
\end{proof}
\begin{credits}
\subsubsection{\ackname} \!Part of this research \review{is}{was} funded by the Hans~Hermann~\mbox{Voss-Stiftung.}
\end{credits}

\bibliographystyle{splncs04}
\bibliography{bib}

\end{document}